\newtheorem{theorem}{Theorem}
\newtheorem{proposition}{Proposition}
\newcommand{\tr}{{\operatorname{Tr}\,}}
\newcommand{\fs}{-}
\title{Trajectory phase transitions in non-interacting systems: all-to-all dynamics and the random energy model}
\author[1,2]{Juan P. Garrahan} 
\author[3,4]{Chokri Manai}
\author[3,4,5]{Simone Warzel}
\affil[1]{\small School of Physics and Astronomy, University of Nottingham, Nottingham, NG7~2RD, UK}
\affil[2]{\small Centre for the Mathematics and Theoretical Physics of Quantum Non-Equilibrium Systems,
University of Nottingham, Nottingham, NG7 2RD, UK}
\affil[3]{\small Department of Mathematics, TU Munich, Germany}
\affil[4]{\small Munich Center for Quantum Science and Technology, Munich, Germany}
\affil[5]{\small Department of Physics, TU Munich, Germany}
\date{\today}							
\begin{document}
\maketitle
\begin{abstract}
	We study the fluctuations of time-additive random observables in the stochastic dynamics of a system of $N$ non-interacting Ising spins. We mainly consider the case of all-to-all dynamics where transitions are possible between any two spin configurations with uniform rates. We show that the cumulant generating function of the time-integral of a normally distributed quenched random function of configurations, i.e., the energy function of the random energy model (REM),
	has a phase transition in the large $N$ limit for trajectories of any time extent. 
	We prove this by determining the exact limit of the scaled cumulant generating function. This is accomplished by connecting the dynamical problem to a spectral analysis of the all-to-all quantum REM. We also discuss finite $N$ corrections as observed in numerical simulations. 

\end{abstract}

\bigskip

\section{Introduction}

In statistical mechanics we are used to studying static phase transitions from singularities in partition sums \cite{Chandler1987}: the value of a control parameter at which the free-energy becomes non-analytic (in the infinite-size limit) indicates that the equilibrium ensemble of configurations undergoes a phase change. The standard equilibrium ensemble method can be generalised straightforwardly to stochastic dynamics by replacing configurations with trajectories, static observables with (time-extensive) functions of trajectories, and the partition sum with the corresponding moment generating function of the trajectory observable \cite{Ruelle2004,Lecomte2007}. The ``thermodynamics of trajectories'' approach \cite{Merolle2005, Garrahan2018, Jack2020} allows to study dynamical or ``trajectory'' phase transitions, that is, singular changes in the nature of dynamical fluctuations that often are not reflected in (thermo)static properties or occur at different parameters of the model. The singularities of the relevant large deviation (LD) functions \cite{Touchette2009} reveal phase transitions in, for example,  the dynamical activity of glassy  systems \cite{Garrahan2007, Hedges2009, Speck2012}, in time-integrated currents in exclusion processes \cite{Derrida2007,Appert-Rolland2008,Jack2015}, and in (active) work in active matter \cite{Nemoto2019}. 

An interesting question is what occurs in a system of many degrees of freedom whose dynamics is non-interacting when one considers the fluctuations of a (quenched) random trajectory-observable that couples them. Our main object of interest will be a system of $N$ Ising spins which all flip independently from each other. For the case of non-random {\em local} observables and independent spins with single spin-flip dynamics recent results \cite{Vasiloiu2020} show that in certain cases there is a phase transition in the LD function. While, naively, one might expect nothing interesting to occur due to the non-interacting nature of the dynamics, these results indicate that the optimal way to generate  large fluctuations is by means of effectively highly correlated dynamics which is singularly different from the typical dynamics \cite{Nyawo2016}. 

Here we start addressing the problem of {\em random and long-ranged} trajectory observables by considering the time integral of a function of configurations whose values are normally distributed with zero mean and variance $N$, that is, the energy function of the simplest mean-field spin glass: the random energy model (REM) \cite{Derrida:1980mg,Bov06}. For simplicity we will consider dynamics which is all-to-all, that is, allowed configuration changes are those where any number of spins can flip simultaneously and independently. We also comment on the case of single-spin flips, which corresponds to the quantum random energy model (QREM).

The general problem we consider here has relevance in several areas. One is the minimisation via trajectory sampling of (quasi) random cost functions \cite{Mair2022}, which arises for example when training neural networks. A second one is in connection to measurement induced phase transitions in quantum systems \cite{Li2018,Skinner2019}, where the calculation of Renyi entropies reduces to computing  the optimal 
dynamics of a random coupling function~ \cite{Agrawal2021,Altland2021} in a system of classical replicas which evolve independently.

\section{Unbiased dynamics} 

Any continuous-time Markov process with trajectories $ \pmb{\omega}: [0,\infty) \to \mathcal{Q}_N  $ on the configuration space $ \mathcal{Q}_N  := \{ -1,1 \}^N  $of $ N $ Ising spins is uniquely characterised in terms of the transition rates $ w_{\pmb{\sigma}\to \pmb{\tau}} $ of spin configurations $ \pmb{\sigma} = (\sigma_1, \dots , \sigma_N) \in \mathcal{Q}_N $ to any other configuration $ \pmb{\tau} $, and the associated escape rates $ r_{\pmb{\sigma}} := \sum_{ \pmb{\tau} \neq \pmb{\sigma}}  w_{\pmb{\sigma}\to \pmb{\tau}}  $. The latter governs the law, $  r_{\pmb{\sigma}}  e^{-  r_{\pmb{\sigma}}  \Delta t} $, of the sojourn time $ \Delta t $ until the next jump out of $ \pmb{\sigma}  $.  
In the following, we choose $  w_{\pmb{\sigma}\to \pmb{\tau}}  := N 2^{-N}  $ independent of the configuration. Since the connectivity of this jump process is then described by the complete graph on $ 2^N $ vertices (i.e.\ spin configurations), this dynamics is called the completely connected or all-to-all stochastic dynamics on Ising configurations. 
Using Dirac's notation, in which $ \left\{ | \pmb{\sigma}\rangle  \ | \   \pmb{\sigma} \in  \mathcal{Q}_N  \right\} $ stands for the canonical orthonormal basis in the Hilbert space $ \ell^2(\mathcal{Q}_N) \equiv \otimes_{j=1}^N \mathbb{C}^2 $,  the generator of this Markov process is given by
$$
W := 
	\sum_{\substack{ 
		\pmb{\sigma}, \pmb{\tau} \in  \mathcal{Q}_N \\ \pmb{\sigma} \neq \pmb{\tau}  
		}}  
		w_{\pmb{\sigma}\to \pmb{\tau}} \, | \pmb{\tau}\rangle\langle  \pmb{\sigma} | 
		-  \sum_{\pmb{\sigma}\in  \mathcal{Q}_N} r_{\pmb{\sigma}}\,   | \pmb{\sigma}\rangle\langle  \pmb{\sigma} |  
		= 
		N \left(  | \fs \rangle \langle \fs |  - \mathbbm{1} \right) 
$$
in terms of the orthogonal projection $  | \fs \rangle \langle \fs |  $ onto the 'flat state'  defined by $\langle \pmb{\sigma}  | \fs \rangle  = 2^{-N/2} $. In its probabilistic interpretation, $ W $ is considered an operator on $ \ell^1( \mathcal{Q}_N) $ and acts on probability distributions $ | p_t \rangle $, i.e.\ $p_t(\pmb{\sigma} ) \equiv  \langle \pmb{\sigma}  | p_t \rangle \geq 0 $ and $  \sum_{\pmb{\sigma} \in \mathcal{Q}_N} p_t(\pmb{\sigma} )  = 1 $. The dynamics of any initial distribution is governed by the master equation
$$
\partial_t | p_t \rangle = W | p_t \rangle . 
$$
The completely connected stochastic dynamics can be regarded as a further simplification of the dynamics of independent spin flips at infinite temperature. The latter is generated by
$\widehat W := \sum_{j=1}^N \left( X_j -\mathbbm{1} \right)$, 
in terms of the Pauli-$ X $ matrices, which flip the $ j $th spin, i.e.\ $  X_j |  \pmb{\sigma}  \rangle = | \sigma_1, \dots , - \sigma_j , \dots , \sigma_N \rangle $.  
Both Markov processes are irreducible and share the equidistribution $ p_{\rm ss}(\pmb{\sigma} ) := 2^{-N} $ as its unique  invariant measure. One difference is their spectral gap, which governs the rate of approach to the equidistribution. While the spectral gap is $ N $ in the case of $ W $, it is $ 2 $ in the case of $ \widehat W  $. In this paper we focus on the completely connected dynamics  $ W $ and only comment on the single spin-flip dynamics $ \widehat W  $.

The dynamics generated by $W$ (and $ \widehat W  $) is ``infinite temperature'' in the sense that transitions are completely independent of the initial and final states. The operator $W$ is therefore bi-stochastic, 
$\langle - | W = 0,  \, W | - \rangle = 0$, with the first equality indicating conservation of probability, and the second that the stationary state is also the flat state (the stationary probability vector being $ 2^{-N/2} |-\rangle$). Since the dynamics of all spins is independent, all correlation functions are unconnected.

\section{Trajectory observable and REM}

We study the statistics under the above defined 
all-to-all independent dynamics of a trajectory observable chosen to explore the energy landscape of the REM \cite{Derrida:1980mg,Bov06}.
The REM,
$U: \mathcal{Q}_N \to \mathbb{R}$, is a Gaussian random field (with randomness independent of the Markov process) in which the values $ U(\pmb{\sigma}) $ are distributed independently for all $ \pmb{\sigma} \in \mathcal{Q}_N $ with identical normal law uniquely characterised by zero mean and covariance $ N $. The units are chosen so that the REM's large deviations occur on order $ N $ which agrees with the norm of $ W $.  In this context, we recall~\cite{LLR83,Bov06} that the REM's minimum (and similarly for its maximum) satisfies the extremal value statistics:
\begin{equation}\label{eq:maxU}
\mathbb{P}\left( \min U \geq   u_N(x) \right) = \left( 1- 2^{-N} e^{-x + o(1)} \right)^{2^N} 
\end{equation}
for any $ x $ with the scaling function $ u_N(x) \coloneqq -\beta_c N + \frac{\ln(N \ln2) - \ln(4\pi)}{2\beta_c} - \frac{x}{\beta_c} $, where $\beta_c=\sqrt{2 \ln 2}$  and $ \mathbb{P} $ denotes the joint law of the REM. In particular, the minimal energy of the REM is roughly at $ - \beta_c N $. 

The trajectory observable we consider is (up to a factor of $t$) the empirical average of the REM energy along a trajectory  $ \pmb{\omega} $ of the Markov process
$$
U_t[ \pmb{\omega}] := \int_0^t U\left( \pmb{\omega}(s)\right) \ ds , \quad t > 0 .  
$$
We will be interested in the probability distribution of this quantity under the law $ \mathbb{P}_t $ on trajectories associated with $ W $ up to time $ t $ with the initial spin configurations equally distributed. The main result of this short note is a proof of a large deviation principle for this distribution in the limit of large system size $ N $ (for trajectories of any time extent $t$). This large deviation principle is  described in terms of the moment generating function 
\begin{equation}\label{eq:mom}
Z(t,\lambda) := 
	\int e^{-\lambda U_t[ \pmb{\omega}] } \ \mathbb{P}_t(d\pmb{\omega}) 
	= \sum_{\pmb{\sigma}, \pmb{\tau}\in \mathcal{Q}_N} 2^{-N} \langle  \pmb{\sigma} | e^{t \left( W - \lambda U \right)} | \pmb{\tau} \rangle 
	=\langle \fs |  e^{t \left( W - \lambda U \right)  } | \fs \rangle . 
\end{equation}
Here the second equality is due to the Feynman-Kac formula for the Markov process under consideration (cf.~\cite{KLW21,Leschke:2021xw}). Crucially, this formula connects the question concerning the (a)typical behavior of $ U_t $ to properties of the tilted generator 
$$ W_\lambda := W - \lambda U , $$
which is a random matrix on $ \ell^2(\mathcal{Q}_N) $. 
Note that by substituting $ W $ by $ \widehat W $, this random matrix coincides, up to a constant shift and change of sign, with the Hamiltonian of the QREM -- one of the simplest quantum spin glass models~\cite{Goldschmidt:1990kr,Manai:2020ta,Manai:2021nu,MaWa20}. In our case, the operator $H_\lambda = -W_\lambda + N \mathbbm{1}$ instead corresponds to the Hamiltonian with an all-to-all kinetic energy studied in~\cite{ASW15}. Due to the symmetry of the REM's distribution the parameter $ \lambda $ can be taken non-negative without loss of generality, and the large deviation function also known as scaled cumulant generating function (SCGF) is then given by
$$
\theta(t,\lambda) := \lim_{N\to \infty} \frac{1}{N t} \ln Z(t,\lambda) . 
$$
The SCGF plays the role of a free energy for trajectory ensembles.

It is important to emphasise that what we are considering here is 
very different from the study of classical thermal dynamics of the REM under Glauber or Metropolis schemes, as in e.g.\ \cite{Arous:2003lr,Arous:2003nt,Cerny:2017ie,Gayrard:2019yr}. In those cases the dynamical Markov generator is interacting (as transitions depend on changes in $U$) and what is studied are the typical trajectories under that interacting dynamics. In contrast we study rare trajectories under the non-interacting dynamics generated by $W$ with large fluctuations of $U_t$.

\section{Trajectory phase diagram}

Our main result is the following:
\begin{theorem}\label{thm:mainasym}
For any $ t  > 0, \lambda \geq  0 $ and almost all realisations of the REM:
\begin{equation}\label{eq:mainasym}
\theta(t,\lambda) = \max\left\{ 0 , t^{-1} p_0(t \lambda ) -1 \right\}  ,
\end{equation}
with 
\begin{equation}\label{eq:defthetanull}
	p_0( \beta ) := \begin{cases} 
		\frac{\beta^2}{2} , & \beta \leq  \beta_c := \sqrt{2 \ln 2} \\
		\beta \beta_c - \ln 2 , & \beta > \beta_c .
		\end{cases}
\end{equation}
\end{theorem}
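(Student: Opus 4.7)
The plan is to reformulate $Z(t,\lambda)$ spectrally. The Feynman-Kac identity in \eqref{eq:mom} gives $Z(t,\lambda) = e^{-Nt}\langle \fs | e^{-tH} | \fs \rangle$, where $H := \lambda U - N | \fs \rangle\langle \fs |$ is a rank-one perturbation of the diagonal random matrix $\lambda U$. So the theorem reduces to showing $\frac{1}{N}\ln\langle \fs | e^{-tH} | \fs \rangle \to \max\{t, p_0(t\lambda)\}$ almost surely, which after dividing by $t$ and subtracting $1$ recovers the claimed formula.

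For the lower bound I would give two complementary estimates. Since $-H$ has non-negative off-diagonal entries in the configuration basis, $e^{-tH}$ preserves positivity and Jensen's inequality applies to any unit vector. Applied to $|\fs\rangle$ it yields $\langle \fs | e^{-tH} | \fs \rangle \geq e^{-t\langle \fs | H | \fs \rangle} = \exp(tN - t\lambda \bar U_N)$ with $\bar U_N := 2^{-N}\sum_\sigma U(\sigma) = o(N)$ almost surely by the strong law of large numbers, giving $\theta \geq 0$. Alternatively, bounding $\langle\fs|e^{-tH}|\fs\rangle \geq 2^{-N}\sum_\sigma \langle \sigma | e^{-tH} | \sigma \rangle$ (by positivity of the matrix entries) and applying Jensen to each $|\sigma\rangle$ gives $\langle\fs|e^{-tH}|\fs\rangle \geq e^{tN\cdot 2^{-N}}\cdot 2^{-N} Y_N(t\lambda)$ with $Y_N(\beta) := \sum_\sigma e^{-\beta U(\sigma)}$. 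Derrida's formula $N^{-1}\ln Y_N(\beta) \to \ln 2 + p_0(\beta)$ almost surely (a standard consequence of Cram\'er's theorem for the bulk and of \eqref{eq:maxU} for the edge) then yields $\theta \geq p_0(t\lambda)/t - 1$.

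The upper bound is the substantive step and exploits the rank-one structure. Krein's resolvent formula gives $\langle \fs | (H - z)^{-1} | \fs \rangle = G(z)/(1 - N G(z))$ with $G(z) := 2^{-N}\sum_\sigma(\lambda U(\sigma) - z)^{-1}$, so that the eigenvalues $E_n$ of $H$ with nonzero $|\fs\rangle$-overlap are the roots of $N G(E_n) = 1$ with weights $w_n := 1/(N^2 G'(E_n))$ summing to one; the interlacing between the spectra of $H$ and of $\lambda U$ places one extra eigenvalue $E_0$ below $\lambda U_{(1)} \approx -\lambda\beta_c N$, while $E_n \in [\lambda U_{(n)}, \lambda U_{(n+1)}]$ for $n \geq 1$. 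For $n \geq 1$ the equation $NG(E_n) = 1$ forces $E_n$ close to a REM pole (within $O(N\, 2^{-N})$), which makes $G'(E_n) \geq c\, 2^N/N^2$ and hence $w_n \leq C\, 2^{-N}$ uniformly; combined with $E_n \geq \lambda U_{(n)}$ this gives $\sum_{n \geq 1}w_n e^{-tE_n} \leq C\, 2^{-N} Y_N(t\lambda) \leq e^{N p_0(t\lambda) + o(N)}$. For $E_0$, two sub-cases arise according to whether $\lambda\beta_c \lessgtr 1$: in the subcritical case a concentration estimate for $G(-N)$ gives $E_0 \approx -N$ and $w_0 \to 1$, so $w_0 e^{-tE_0} \leq e^{tN + o(N)}$; in the supercritical case \eqref{eq:maxU} locates $E_0$ exponentially close to $\lambda U_{(1)}$ with $w_0 \sim 2^{-N}$, so $w_0 e^{-tE_0} \leq e^{N(t\lambda\beta_c - \ln 2) + o(N)} \leq e^{N p_0(t\lambda) + o(N)}$, the last step using the elementary inequality $t\lambda\beta_c - \ln 2 \leq p_0(t\lambda)$ (sharp iff $t\lambda \geq \beta_c$). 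Adding the two pieces yields the matching upper bound.

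The main technical obstacle is this spectral control: obtaining $w_n \leq C\, 2^{-N}$ uniformly for all $n \geq 1$, and the subcritical/supercritical dichotomy for $(E_0, w_0)$, for almost every realisation of the REM. The subcritical analysis uses a Cram\'er-type large-deviation estimate for $2^{-N}\sum_\sigma (\lambda U(\sigma) - z)^{-1}$ away from the spectrum of $\lambda U$; the supercritical analysis uses the Poisson point process limit of the extremal REM values from \eqref{eq:maxU}. Interfacing the bulk LLN with the extreme value statistics is where the real work concentrates.
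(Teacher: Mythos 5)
Your lower bound is the paper's, verbatim modulo the shift from $W_\lambda$ to $H=\lambda U-N|\fs\rangle\langle\fs|$: the two Jensen estimates applied to $|\fs\rangle$ and to each $|\pmb{\sigma}\rangle$ are exactly what the paper does, and the reduction of the second one to Derrida's $p_0$ is the same.

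The upper bound, however, has a genuine gap. Your key claim is that for every $n\geq 1$ the solution $E_n$ of $NG(E_n)=1$ lies within $O(N\,2^{-N})$ of a REM pole and hence $w_n = |\langle\fs|\psi_{E_n}\rangle|^2 \leq C\,2^{-N}$ uniformly. This is false once $\lambda>\beta_c^{-1}$. In that regime the smooth (law-of-large-numbers) part of $G(z)$ already takes the value $\approx 1/N$ near $z=-N$, because the density of REM energies at $U\approx -N/\lambda$ is only exponentially reduced, not zero, when $\lambda\beta_c>1$; the resonance condition is met by the continuum rather than by a single nearby pole, so $G'(E_n)$ is only $\mathcal O(N^{-2})$ there and the corresponding weight is of order one (or, for $\lambda>\sqrt{2}$, spread over a narrow band whose individual weights are still far larger than $2^{-N}$). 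Since for $\lambda>\beta_c^{-1}$ this (semi-)delocalised state is no longer $E_0$ — the extremal eigenvalue has moved to $\approx\lambda U_{(1)}<-N$ — it sits among the $E_n$, $n\geq 1$, and your bound $\sum_{n\geq 1}w_n e^{-tE_n}\leq C\,2^{-N}Y_N(t\lambda)=e^{Np_0(t\lambda)+o(N)}$ would miss the true contribution $\approx e^{tN}$ whenever $t>p_0(t\lambda)$, i.e.\ exactly on the part of the supercritical region where the Active phase dominates. The paper avoids this by not seeking a uniform overlap bound: Proposition~\ref{prop:ASW15} gives $|\langle\fs|\psi_E\rangle|^2\leq N^a 2^{-N}$ only for eigenvalues with $|E|>\delta N$ \emph{and} $|E+N|>\delta N$ (in $W_\lambda$-coordinates), i.e.\ excluding $\delta N$-windows around the resonance energy and around the centre of the REM band; the contributions from those two windows are bounded trivially by their energies, $e^{t\delta N}$ and $e^{tN(\delta-1)}$, and $\delta\downarrow 0$ at the end. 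Your Krein-formula calculation can be repaired by the same three-way split, at which point it essentially unpacks Proposition~\ref{prop:ASW15}, but as written the uniform weight bound is the step that fails.
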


Before spelling out the short proof of Theorem~\ref{thm:mainasym} in Section~\ref{sec:proof} below, let us put this result in some context and discuss some consequences. The quantity defined in~\eqref{eq:defthetanull} is the pressure corresponding to the REM's static (normalised) partition function at inverse temperature $ \beta $:
\begin{equation}\label{eq:REMp}
p_0( \beta )  = \lim_{N\to \infty} \frac{1}{N} \ln \frac{1}{2^N} \sum_{\pmb{\sigma}} e^{-\beta U(\pmb{\sigma}) }  .
\end{equation}
The critical value $ \beta_c = \sqrt{2 \ln 2} $ corresponds to the inverse of the REM's freezing temperature into a spin glass phase with 1-step replica symmetry breaking, cf.~\cite{Bov06}. 

The phase diagram resulting from Theorem~\ref{thm:mainasym} is thus composed of three regimes depicted in~Fig.~1(a):
\begin{enumerate}
\item  An  {\em Active} dynamical phase in which the Markov generator $ W $ dominates over the tilting, and which is characterised by $ \theta(t,\lambda) = 0 $ and the specific activity being unity (see below). It is separated from the remaining regimes by a first-order transition line.
This regime persists for all $ | \lambda | < \beta_c (2t)^{-1} + \beta_c^{-1} $ in case $ t^{-1} < 2 \beta_c^{-2} $ and $ |\lambda | <  \sqrt{2 t^{-1}} $ in case $ t^{-1} \geq  2 \beta_c^{-2} $. 
\item A regime of vanishing activity 
which occurs for $ t^{-1}  <2 |\lambda| \beta_c^{-1}  - 2  \beta_c^{-2}  $ and which is dominated by the REM's extreme values where the system localises. This regime is related to the spin-glass phase of the REM. 
We call this the {\em Inactive-1} dynamical phase. 

\item The remaining parameter regime corresponds to a second inactive regime which we term {\em Inactive-2} dynamical phase. 
It occurs only if $ t^{-1} > 2/\beta_c^2 $ and is related to the classical paramagnetic phase of the REM.
 \end{enumerate}

 In particular, in the long-time limit, $ t \to \infty $, the value $ \lambda = \beta_c^{-1} $ separates the Active and Inactive-1 phases, the latter dominating at at large $ \lambda $. Not surprisingly, this transition in the largest eigenvalue of the tilted generator $ W_\lambda= W - \lambda U $
 reflects the known location of its quantum-phase transition.
 As we will recall in Section~\ref{sec:proof} below, the eigenvector corresponding to the largest eigenvalue changes near $ \lambda = \beta_c^{-1} $ from a delocalised state resembling $ | \fs \rangle $
(indicating that trajectories visit all states equally giving rise to large activity)  to a state localised at the REM's maximising spin configuration (corresponding to trajectories that are inactive as they do not move away from this configuration). 

The classification above of the trajectory phases in terms of their activity is obtained as follows.
The dynamical activity is the total number of configuration changes in a trajectory. It can be calculated through the same tilting method used above for the time-integrated REM energy. Specifically, if we define the doubly tilted partition sum $Z(t,\lambda, s) := \langle \fs |  e^{t W_{\lambda,s}} | \fs \rangle$ with $W_{\lambda,s} = N e^{-s} | \fs \rangle \langle  \fs |  - N (1 - 2^{-N}(1-e^{-s}) ) \mathbbm{1} - \lambda U$ (where the additional tilting by $e^{-s}$ of the off-diagonal part of $W$ allows to count jumps in trajectories), we get the activity from $-\partial_s \log Z(t,\lambda, s)|_{s=0}$. Using the results above it is easy to see that the average activity per unit space and time is unity in the active phase and zero in the two inactive phases.

\begin{figure*}[t]
    \centering
    \includegraphics[width=\textwidth]{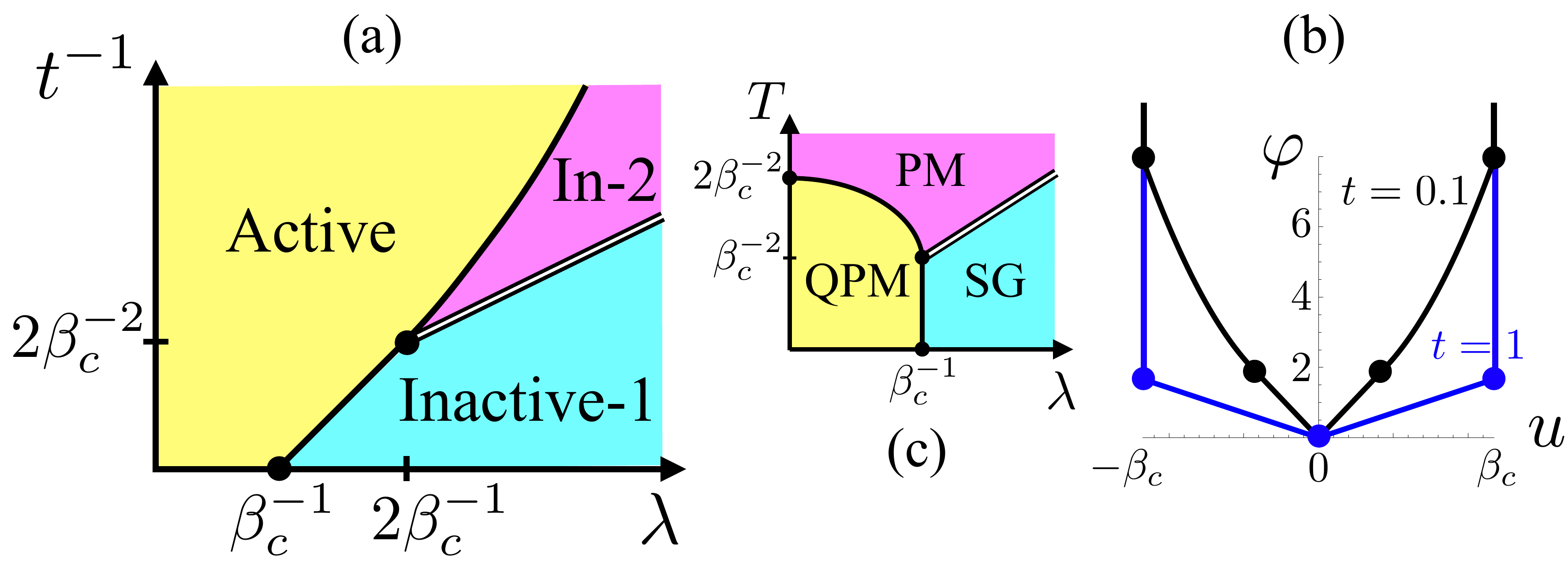}
	\caption{
	(a) Dynamical phase diagram in the limit of $N \to \infty$. The abscissa is the counting field $\lambda$
	conjugate to the time-integrated REM energy 
	and the ordinate the inverse of the trajectory length $t$. The full lines indicate first-order transitions between Active and Inactive-1 or Inactive-2 trajectory phases, while the double line indicates a 1-step RSB transition between the Inactive-1 and Inactive-2 phases. 
	(b) Large deviation function $\varphi(t,u)$ at two different values of $t$. For $t^{-1} = 1 < 2\beta_c^{-2}$ (blue lines) the rate function is one of coexistence between Active (which has $u=0$) and Inactive-1 (which has $u = \pm \beta_c$, by symmetry). The linear portion of the rate function is the Maxwell construction indicative of phase coexistence (in time). 
	For $t^{-1} = 10 > 2 \beta_c^{-2}$ the rate function describes the coexistence between the three phases (black). The linear portion between $0$ and $\sqrt{2t}$ is now the first-order coexistence between Active and Inactive-2. In Inactive-2 $|u|$ can take values with decreasing probability between $\sqrt{2t}$ and $\beta_c$. The rate function is infinite for any $|u|$ beyond $\beta_c$ (indicative of zero probability for such trajectories).
	(c) Thermal phase diagram of the all-to-all QREM for comparison to (a).}
	\label{fig:phase}
\end{figure*}

Via the G\"artner-Ellis theorem \cite{DemZeit98},  the rate function of the large deviation principle obeyed by $ U_t $ is given by the Legendre-Fenchel transformation 
$$
\varphi(t,u)  := \sup_{\lambda } \left( u \lambda -  \theta(t,\lambda) \right)  =
 \begin{cases} |u| \sqrt{\frac{2}{t}}, & |u| \leq \min\left\{ \sqrt{2t} , \beta_c \right\},  \\  1+ \frac{u^2}{2t}, & \mbox{else}, \\ \infty,  & |u| > \beta_c . \end{cases}
$$
Note that, although in  Theorem~\ref{thm:mainasym} initially  defined only for $ \lambda \geq 0 $ ,  the function $ \lambda \to  \theta(t,\lambda) $  extends to all real values by symmetry. The rate function $ u \to \varphi(t,u) $  is then symmetric as well. For times $ t > \beta_c^2/2 = \ln2 $, the second case in the above equation  is absent.
As a corollary to Theorem~\ref{thm:mainasym} and \cite[Thm 2.3.6]{DemZeit98}, we thus obtain the promised large deviation principle
\begin{align}\label{eq:LDPU}
 - \inf_{u \in I^\circ } t \varphi(t, u) & \leq    \liminf_{N\to \infty} \frac{1}{N} \ln \mathbb{P}_t\left((Nt)^{-1} U_t \in I \right) \notag \\ 
&  \leq  \limsup_{N\to \infty} \frac{1}{N} \ln \mathbb{P}_t\left( (N t)^{-1} U_t \in I  \right) = - \inf_{u \in \overline{I}} t \varphi(t, u) 
\end{align}
which holds for any Borel set $ I \subset \mathbb{R} $ and any $ t > 0 $. The rate function is shown in Fig.~1(b) for two different times. 

Clearly, under the apriori measure $ \mathbb{P}_t $, which favors rapid changes of spin configurations at the rate $ N(1-2^{-N}) $, the typical value of the REM's empirical energy density $N^{-1}  U_t[\pmb{\omega}] $ along any trajectory $ \pmb{\omega} $ is close to zero. 
The fluctuations about this typical behavior are described by~\eqref{eq:LDPU}: close to $ u = 0 $, these fluctuations are linearly suppressed with a rate proportionally to $ N \sqrt{ 2 t} $.  Tilting the apriori measure, one encounters one or two phase transitions depending on whether $ t >  \ln 2  $ or not. If $ t < \ln 2 $, one enters a regime  $ \sqrt{2t} <  | u | < \beta_c $ with Gaussian fluctuations. Beyond this, i.e., at energy densities of the order of the REM's maximum or minimum~\eqref{eq:maxU}, the energy density effectively stops fluctuating. Trajectories freeze for long times in the REM's extremal values.

\section{Comparison to the thermal phase diagram}

The dynamical partition sum of the stochastic system we are considering is reminiscent of a quantum (thermal and static) partition sum for the all-to-all version of the QREM. While the calculation of both is analogous, there are some important differences. Specifically, if we consider the tilted generator $W_\lambda$ as (minus) a Hamiltonian, the (specific) free energy of the associated quantum problem at temperature $T$ is 
\begin{equation}\label{eq:free}
f(T,\lambda) := \lim_{N\to \infty}  \frac{T}{N } \ln \frac{1}{2^N} \tr e^{( W - \lambda U)/T} = \max\left\{-  T \ln 2,  T p_0(\lambda/T)  -1 \right\} .  
\end{equation}
As we will explain in Section~\ref{sec:proof} below, the last equality follows straightforwardly from results on the eigenvalues in~\cite{ASW15}. 

Similarly to Theorem 1, from \eqref{eq:free} we see that, depending on coupling and temperature, the all-to-all QREM can be in three different phases, 
a delocalised quantum paramagnetic phase (QPM), a localised spin-glass phase (SG) and a classical paramagnetic phase (PM), see Fig.~1(c).
These three static quantum phases are similar to the dynamical ones of the stochastic problem. But is worth pointing out that at $T>0$ the (thermo)static phase transitions described by $ f(T,\lambda)  $, do not coincide with the dynamic phase transitions described by $ \theta(t,\lambda) $.
These differences arise because of the boundary vectors in the dynamical partition sum versus the trace in the static quantum one. For a comparison of the phase diagrams, see Figs.~1(a) and 1(c).

\section{Numerical illustration of finite size corrections}

The exact results above are for the limit $N \to \infty$. At finite $N$ there are of course finite-size corrections and sample-to-sample fluctuations between different realisations of the disorder $U$. Using numerics, we now illustrate some of these finite-size effects. (A comprehensive numerical study of both the all-to-all and single spin-flip problem will be presented in a future publication.)

\begin{figure*}[t]
    \centering
    \includegraphics[width=\textwidth]{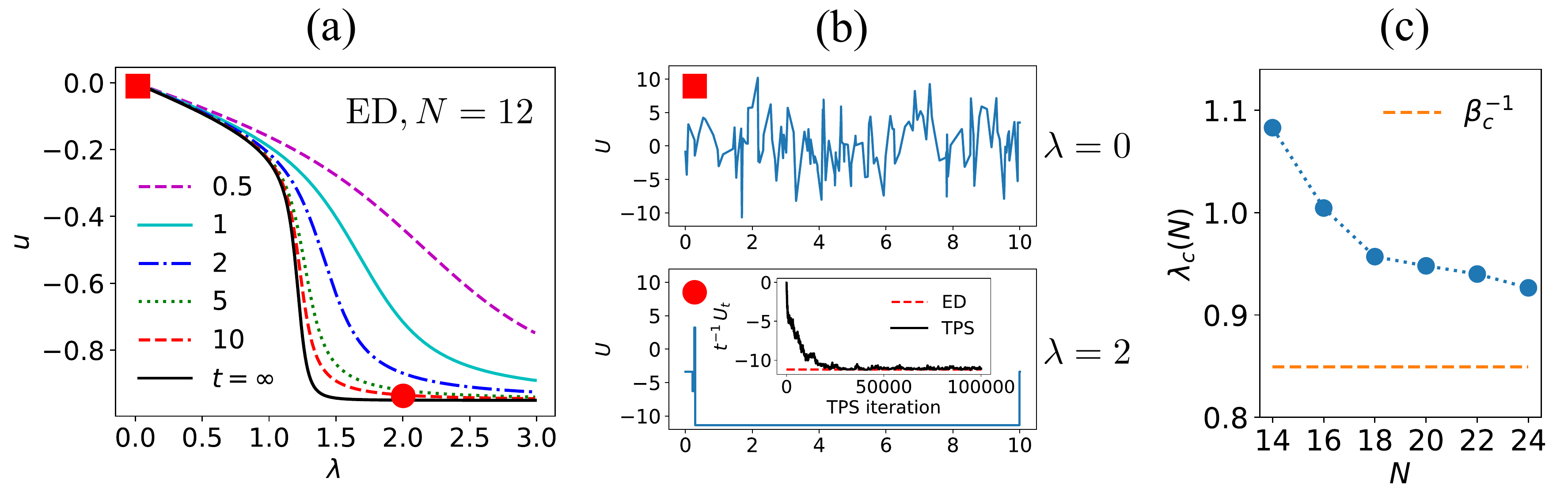}
	\caption{
	(a) Dynamical order parameter $-\partial_\lambda \theta(\lambda, t)$ for various times $t$, from exact diagonalisation (ED) for one disorder realisation with system size $N=12$. 
	(b) The top panel shows a typical trajectory of time extent $t=10$, corresponding to unbiased dynamics, $\lambda=0$ (generated from $W$ via standard continuous-time Monte Carlo from a random initial configuration). 
	The bottom panel shows a characteristic trajectory of the inactive phase, $\lambda=2$. This was obtained via transition path sampling (TPS, see main text). The inset to the lower panel shows the convergence of the trajectory sampling (black): each iteration is a different trajectory and their time-integrated energy converges to the ED result (red dashed) with enough TPS iterations. 
	(c) Transition point $\lambda_c(N)$ for $t=\infty$ as a function of system size, averaged over 20 disordered realisations. The dashed line is the large $N$ value of $\lambda_c$. 
	}
	\label{fig:traj}
\end{figure*}

When the system size is not too large the dynamical partition sum \eqref{eq:mom} can be computed numerically using exact diagonalisation (ED). We illustrate results for one disorder realisation in a system of size $N=12$ of the dynamical order parameter:
\begin{equation}\label{eq:u}
	u(\lambda, t) := \frac{1}{N t Z(t,\lambda)}  
		\int U_t[ \pmb{\omega}] e^{-\lambda U_t[ \pmb{\omega}] } \ \mathbb{P}_t(d\pmb{\omega}) 
		=
		-\frac{\partial}{\partial \lambda} \theta_N(t,\lambda)
\end{equation}
Figure 2(a) shows the following: (i) for finite size the phase transitions turn into crossovers, as expected; (ii) for all $t$ there is a crossover from $u \approx 0$ at $\lambda=0$ to a large negative $u$ for large $\lambda$, eventually reaching the minimum of the potential (which changes from sample to sample); (iii) these crossovers are sharper the longer $t$, also expected due to the preference of the boundary states in \eqref{eq:mom} for the delocalised state. 

In Fig.~2(b) we show representative trajectories for two values of $\lambda$ for $t=10$. We plot the instantaneous energy as a function of time in the trajectory. The top panel shows a typical trajectory of the dynamics corresponding to $\lambda=0$, cf.\ the red square in Fig.~2(a). This trajectory generated by $W$ is sampled using standard (continuous-time) Monte Carlo \cite{Bortz1975}. Since the unbiased dynamics connects all configurations with equal rates the trajectory jumps between the energy values: it corresponds to the phase which has high activity and is delocalised. The bottom panel shows a characteristic trajectory for $\lambda=2$, cf.\ the red circle in Fig.~2(a). This is a rare event (exponentially suppressed in $N$ and $t$) of the dynamics, and as such cannot be easily sampled from running  Monte Carlo with $W$ (since $W_\lambda$ is not a stochastic operator). We obtain such rare trajectories instead by performing importance sampling in trajectory space using transition path sampling (TPS) \cite{Bolhuis2002}, essentially a Monte Carlo method in trajectory space that aims to ``equilibrate'' to a reweigthed trajectory distribution 
$Z(t,\lambda)^{-1} e^{-\lambda U_t[ \pmb{\omega}] }$ (supplemented with bridge moves to improve acceptance; we will provide details of this method in a future publication). The inset to the lower panel shows the convergence of our TPS approach: it shows the evolution of the sampled trajectories with TPS iterations by showing their $U_t[ \pmb{\omega}]$ (per unit time). The $U_t[ \pmb{\omega}]$ in the inset converges eventually to the value expected at $\lambda=2$, showing that TPS converges to the tilted trajectory ensemble. The trajectory shown in the lower panel is the last trajectory from TPS. It is very different from the typical one in the upper panel: it has very low activity and is localised for most of the time in the minium energy configuration, corresponding to the Inactive-1 dynamical phase. Note that while we only illustrate the numerics for the size $N=12$, TPS can be used for larger system sizes in contrast to ED. 

Figure 2(c) shows the location of the critical $\lambda$  for $t=\infty$, averaged over 20 realisations of the disorder, for different systems sizes. The transition point is inferred from the maximum of the {\em dynamical susceptibility} $\partial_\lambda^2 \theta_N(\lambda)$, where $\theta_N(\lambda)$ is the largest eigenvalue of $W_\lambda$. This eigenvalue is calculated using \eqref{eq:chareq} below, which allows to compute it for larger sizes than those accessible to ED. The figure suggest a convergence to the limiting value $\beta_c^{-1}$ for large $N$, as expected from the analytics above.

\section{Proof of the large deviation result}\label{sec:proof}
The Feynman-Kac formula~\eqref{eq:mom} reduces the large deviation problem to a spectral analysis of the random matrix 
$ W_\lambda = W - \lambda U $, which --  as motivated in the introduction -- may serve as a toy model to the QREM with a simpler all-to-all kinetic energy term with. 
Up to a constant shift and rescaling, the spectrum of $ W_\lambda $ has been analysed in~\cite{ASW15} both on the macro and microscopic scale of the eigenvalue process. 
The main technical tool for studying $ W_\lambda$ is rank-one perturbation theory according to which $ E $ is an eigenvalue of $ W_\lambda $ if and only if 
\begin{equation}\label{eq:chareq}
\frac{1}{N} = \langle \fs | (E+N + \lambda U)^{-1} | \fs \rangle = \frac{1}{2^N} \sum_{\pmb{\sigma}} \frac{1}{E+N +\lambda U(\pmb{\sigma})} .
\end{equation}
The corresponding eigenvectors $\psi_E $  satisfy for all $ \pmb{\sigma},  \pmb{\tau} \in \mathcal{Q}_N $:
\begin{equation}\label{eq:ratio}
	\frac{\langle \pmb{\sigma} | \psi_E \rangle}{\langle \pmb{\tau} | \psi_E \rangle} = \frac{E + N + \lambda U(\pmb{\tau}) }{E+N +\lambda U(\pmb{\sigma})} . 
\end{equation}
An immediate implication of~\eqref{eq:chareq} is the fact that all eigenvalues of $ W_\lambda $ aside from the largest one are interlaced with the REM's energies and additionally shifted by $- N $ (cf.\ e.g.\ \cite{AizWar15} and refs.\ therein for interlacing and finite-rank perturbation theory). All eigenvalues are almost surely simple. Moreover, any
 solution of~\eqref{eq:chareq} with $ E > ( \lambda \beta_c -1) N $ (cf.~\eqref{eq:maxU}) is independent of the realisation of $ U $ up to exponentially small fluctuations. By the law of large numbers the 
right-hand side of~\eqref{eq:chareq} is then well approximated as $ N \to \infty $ by the integral
$$
\int_{-\infty}^\infty \frac{\exp(-v^2/2) }{E + N + \lambda \sqrt{N} v} \frac{dv}{\sqrt{2\pi}} = \frac{1}{E + N} \left( 1 + \frac{\lambda^2}{N (1+ E/N) } +  \mathcal{O}(N^{-2}) \right) . 
$$
This explains the following results on the largest eigenvalue $ E_0 := \max \sigma(W_\lambda) $, which are found in~\cite{ASW15}:
\begin{enumerate}
\item In case $ \lambda \beta_c < 1 $, on an event with probability exponentially close to one, the largest eigenvalue is at $E_0 := \max \sigma(W_\lambda) =  \lambda^2 + \mathcal{O}(N^{-1}) $ and the corresponding eigenvector satisfies
$ \langle \pmb{\sigma} | \psi_{E_0} \rangle \propto (E_0 + N  + \lambda U(\pmb{\sigma}) )^{-1}  $. Since $ 2^{-N} \sum_{\pmb{\sigma} } (E_0 + N  + \lambda U(\pmb{\sigma}) )^{-2}  $ is of order one up to exponentially small fluctuations by the law of large numbers, this vector is hence still delocalised (as in the case $ \lambda = 0 $). 
\item  In case $ \lambda \beta_c > 1 $, the largest eigenvalue is at 
\begin{equation}\label{eq:locE0}
E_0 = \max \sigma(W_\lambda) = - \lambda \min U - N  - N 2^{-N} \left( 1 - \frac{1}{\lambda \beta_c} \right)^{-1} + o(N 2^{-N} ) 
\end{equation}
and the corresponding eigenvector is mostly concentrated on the REM's minimising configuration $ \pmb{\sigma}_0 \coloneqq \arg \min U $. This is specified through the ratios~\eqref{eq:ratio}. Note that the error term in the above equation only holds with a probability up to $ 1 - \mathcal{O}(1/N) $, cf.~\cite[App A]{ASW15}. 
\end{enumerate}
In particular, the union of eigenvalues, $ \sigma(W_\lambda) $, when divided by $ N $, converges almost surely to the non-random set $ \{ 0\} \cup [- \lambda \beta_c -1, \lambda \beta_c-1]  $. Together with the interlacing property, one then also easily arrives at~\eqref{eq:free} for the free energy of $ W_\lambda $.\\

The proof of Theorem~\ref{thm:mainasym} requires slightly more detailed knowledge, since $ \langle \fs | e^{t W_\lambda} | \fs \rangle $ involves properties of the eigenvectors, too. The rough picture established in~\cite{ASW15} through a more detailed
analysis of the characteristic equation~\eqref{eq:chareq} is the following: 
\begin{enumerate}
\item Delocalisation of one eigenstate near energy $ 0 $ is shown to persist up to  $ \lambda < \sqrt{2} $. From that value on, $ \lambda > \sqrt{2} $, this eigenstates ``melts'' into a narrow band of semi-delocalised states near energy $ 0 $. 
\item The eigenvalue process, when rescaled to order one at some fixed energy outside $ - N $ and $ 0 $,  is given by a Poisson process. Correspondingly, outside those special energies the normalised eigenvectors are localised. 
\end{enumerate}
We will need the following result, which is contained in~\cite[Proof of Thm. 6.3]{ASW15}. 
\begin{proposition}\label{prop:ASW15}
For any $ \delta > 0 $ and any $ N $ there is some $ a > 0 $ and an event $ \Omega_N $ whose complement is summable, $ \sum_N \mathbb{P}\left( \Omega_N^c\right) < \infty $, such that in the event $ \Omega_N $ any eigenvalue $ E $ of $ W_\lambda $ with $ |E| > \delta N $ and $ |E + N | > \delta N $ has a normalised eigenvector $ \psi_E $, which satisfies 
$| \langle \fs | \psi_E \rangle |^2 \leq N^a \ 2^{-N } $.  Moreover, for any such $ E $, there is some $ \pmb{\sigma}_E \in \{ - 1, 1 \}^N $ such that $ | E +N - \lambda U( \pmb{\sigma}_E) | \leq \delta N $.
\end{proposition}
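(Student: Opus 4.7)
The plan is to exploit the rank-one structure $W_\lambda + N\mathbbm{1} + \lambda U = N|\fs\rangle\langle \fs|$ underlying~\eqref{eq:chareq}--\eqref{eq:ratio} to reduce the proposition to concentration estimates for the Stieltjes transform of the empirical pole measure. Since $\psi_E \propto (E+N+\lambda U)^{-1}|\fs\rangle$ (diagonal operator times the flat state), combining its normalisation with~\eqref{eq:chareq} yields the closed-form identity
$$
|\langle \fs|\psi_E\rangle|^2 \ = \ \frac{1}{N^2 \bigl(-G'(E+N)\bigr)}, \qquad G(z) \ := \ \frac{1}{2^N} \sum_{\pmb{\sigma}} \frac{1}{z+\lambda U(\pmb{\sigma})}.
$$
The bound of Part~1 is therefore equivalent to the uniform lower bound $-G'(E+N)\geq 2^N/N^{a+2}$ at the eigenvalues of interest. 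I will obtain this by showing that on a high-probability event, each such eigenvalue is in fact exponentially close to a pole of $G$ --- which will simultaneously deliver Part~2.

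For Part~2, Cauchy interlacing applied to the positive rank-one perturbation $W_\lambda = (-N\mathbbm{1}-\lambda U) + N|\fs\rangle\langle\fs|$ places every non-top eigenvalue of $W_\lambda$ between two consecutive diagonal entries $-N-\lambda U(\pmb{\sigma})$. Taking $\pmb{\sigma}_E$ to be one of these two neighbours bounds $|E+N+\lambda U(\pmb{\sigma}_E)|$ by the local spacing of sorted $\lambda U$-values, which a Gaussian order-statistics calculation shows is at most $\delta N$ throughout $|\lambda U|\leq \lambda\beta_c N$ on an event of exponentially high probability. The non-interlaced top eigenvalue in the localised phase is handled via~\eqref{eq:locE0}, which directly places $\pmb{\sigma}_E = \arg\min U$ with exponentially small error. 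For Part~1 I set $\epsilon := E+N+\lambda U(\pmb{\sigma}_E)$ and isolate this term in~\eqref{eq:chareq}, obtaining
$$
\epsilon^{-1} \ = \ 2^N \bigl[\, 1/N - R(E+N)\, \bigr], \qquad R(z) \ := \ \frac{1}{2^N}\sum_{\pmb{\sigma}\neq \pmb{\sigma}_E} \frac{1}{z+\lambda U(\pmb{\sigma})}.
$$
The task reduces to showing that on a further event of summable complementary probability, $|1/N - R(z)|\geq \eta/N$ for some $\eta = \eta(\delta,\lambda)>0$, uniformly over $z$ in the eigenvalue range and over all $2^N$ possible choices of $\pmb{\sigma}_E$. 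Granted this, $|\epsilon|\leq N/(\eta 2^N)$ and hence $-G'(E+N)\geq 2^{-N}\epsilon^{-2}\geq \eta^2 2^N/N^2$, completing Part~1.

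The principal obstacle is precisely this uniform control of $1/N - R(z)$. For fixed $z$ and $\pmb{\sigma}_E$, $\mathbb{E}R(z)$ equals the Stieltjes transform of $\mathcal{N}(0,\lambda^2 N)$ at $z$, which on $|z|>\delta N$ is $z^{-1}+\mathcal{O}(N^{-3/2})$; so $|1/N - \mathbb{E}R(z)| = |z-N|/(|z|N) + \mathcal{O}(N^{-3/2})$, which is bounded below by $\eta/N$ by the twin hypotheses $|E|>\delta N$ (giving $|z-N|>\delta N$) and $|z|\leq \lambda\beta_c N$, with $\eta$ depending on $\delta$ and $\lambda$. Exponential concentration of $R(z)$ at fixed $z$ follows from a Gaussian/Hoeffding bound on the truncated sum obtained after removing the few $\pmb{\sigma}$ with $-\lambda U(\pmb{\sigma})$ nearest $z$, whose contribution is separately controlled via the Gaussian tail of $U$. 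Uniformity in $z$ is then recovered by a chaining argument: discretise $z$ on a polynomial-in-$N$ grid and use the pointwise Lipschitz bound $|R'(z)|\leq 2^{-N}\sum (z+\lambda U(\pmb{\sigma}))^{-2}$, which at distance $\geq \rho$ from every pole is at most $\rho^{-2}$, with $\rho$ chosen polynomially small; the interlacing of Part~2 guarantees that the relevant eigenvalues lie outside the excluded pole neighbourhoods. A union bound over the grid and over the $2^N$ candidates for $\pmb{\sigma}_E$ absorbs an overall polynomial prefactor into the constant $a$ of the proposition and produces $\Omega_N$ satisfying $\sum_N \mathbb{P}(\Omega_N^c)<\infty$.
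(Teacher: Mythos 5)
Your reduction is the right one: using $\psi_E \propto (E+N+\lambda U)^{-1}|\fs\rangle$ together with the characteristic equation~\eqref{eq:chareq} does give the closed form $|\langle \fs|\psi_E\rangle|^2 = \bigl(N^2(-G'(E+N))\bigr)^{-1}$, the interlacing of eigenvalues with the diagonal entries $-N-\lambda U(\pmb{\sigma})$ is how Part~2 goes (and automatically gives a much better-than-$\delta N$ bound for the non-top eigenvalues), and isolating the nearest pole is the natural way to bound $-G'$ from below. The paper itself does not prove the proposition but imports it from the spectral analysis in~\cite{ASW15}, so you are essentially attempting a self-contained re-derivation. The framework matches; the execution of the concentration step does not.

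The central gap is the claim that the truncation scale $\rho$ can be taken \emph{polynomially} small with only ``the few'' $\pmb{\sigma}$ nearest $z$ removed and controlled ``via the Gaussian tail.'' At the energies the proposition is actually about, this is false. For a non-top eigenvalue one has $z = E+N$ with $|z| \lesssim \lambda\beta_c N$, and the hypotheses $|E|>\delta N$, $|E+N|>\delta N$ put $z$ at an intermediate scale $z\approx cN$ with $\delta \le |c| < \lambda\beta_c$. The expected number of $\pmb{\sigma}$ with $|z+\lambda U(\pmb{\sigma})| < \rho$ is $\approx 2\rho\, 2^N (\lambda\sqrt{2\pi N})^{-1}\exp(-c^2 N/(2\lambda^2))$, and the exponent $N(\ln 2 - c^2/(2\lambda^2))$ is \emph{strictly positive} precisely when $|c|<\lambda\beta_c$. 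So for any polynomially small $\rho$, exponentially many poles sit within $\rho$ of $z$; their collective contribution to $R(z)$ is not a tail event, and the Hoeffding bound on the truncated sum has an exponentially large sup-norm constant $\rho^{-1}$ if one instead shrinks $\rho$ far enough to exclude them. Consequently the discretisation on a polynomial grid with Lipschitz constant $\rho^{-2}$ does not close, and the union bound over the grid times $2^N$ choices of $\pmb{\sigma}_E$ multiplies a failure probability that is only $\exp(-\delta^2 N/(2\lambda^2))$, which loses to $2^N$ exactly in the regime $\delta<\lambda\beta_c$ you need. Any repair must either push $\rho$ to an exponentially small scale and replace Hoeffding by a variance-based (Bernstein-type) bound while tracking how that interacts with the exponentially fine grid, or, as in~\cite{ASW15}, argue structurally via the Stieltjes transform $\langle\fs|(W_\lambda - z)^{-1}|\fs\rangle = -G(z+N)/(1-NG(z+N))$ and the approximate Poisson statistics of near-resonant levels rather than by pointwise uniform concentration of $R$.

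A secondary issue: you handle the top (non-interlaced) eigenvalue in the localised phase via~\eqref{eq:locE0}. As the paper remarks, that error estimate only holds with probability $1-\mathcal{O}(1/N)$, so an event built from it cannot have summable complement as Proposition~\ref{prop:ASW15} requires; you need a cruder but higher-probability upper bound on $E_0 + N + \lambda\min U$ (a characteristic-equation comparison with the second-lowest REM value suffices). Finally, note the sign: the proposition's displayed condition $|E+N-\lambda U(\pmb{\sigma}_E)|\le\delta N$ should read $+\lambda U(\pmb{\sigma}_E)$ to match the pole locations in~\eqref{eq:chareq}; your argument produces the latter, which is consistent with the use made of the proposition in~\eqref{eq:upperb} up to the symmetry of the REM.
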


\begin{proof}[Proof of Theorem~\ref{thm:mainasym}]
The proof proceeds by establishing asymptotically coinciding upper and lower bounds. 
For the lower bound, we use Jensen's inequality to conclude
$$
 \ln \langle \fs | e^{t W_\lambda} | \fs \rangle \geq t  \langle \fs | W_\lambda | \fs \rangle  =  \frac{t \lambda}{2^N} \sum_{\pmb{\sigma}} U(\pmb{\sigma}) .
$$
By the law of large numbers, this term converges to zero for almost all realisations of the REM.  For another lower bound, which is sharper in case $  t < p_0(t\lambda)  $, we estimate
\begin{align*}
\langle \fs | e^{t W_\lambda} | \fs \rangle = \frac{1}{2^N} \sum_{\pmb{\sigma}, \pmb{\tau}}  \langle \pmb{\tau}  | e^{t W_\lambda} | \pmb{\sigma} \rangle & \geq  \frac{1}{2^N} \sum_{\pmb{\sigma}}  \langle \pmb{\sigma}  | e^{t W_\lambda} | \pmb{\sigma} \rangle \\
& \geq  \frac{1}{2^N} \sum_{\pmb{\sigma}}  \exp\left( tN ( |\langle \pmb{\sigma} | \fs \rangle |^2 -1)- t \lambda U(\pmb{\sigma}) \right) ,
\end{align*}
where the last step is again by Jensen's inequality.  Using $ |\langle \pmb{\sigma} | \fs \rangle|^2 = 2^{-N} $ and~\eqref{eq:REMp}, the combination of the above estimates yields~\eqref{eq:mainasym} as a lower bound. 

A complementing upper bound is based on Proposition~\ref{prop:ASW15}. Expanding in eigenfunctions and splitting the sum over all eigenvalues in three parts corresponding to energies $ E $ with $ |E + N| \leq \delta N $, $ | E | \leq \delta N $ and the rest, we  write and estimate using Proposition~\ref{prop:ASW15}:
\begin{align}\label{eq:upperb}
 \langle \fs | e^{t W_\lambda} | \fs \rangle = \sum_E e^{t E} | \langle \fs | \psi_E \rangle |^2 \leq e^{tN(\delta-1)} + e^{t N\delta} + \frac{N^a}{2^N} e^{tN(\delta-1)} \sum_{\pmb{\sigma}} e^{-t \lambda U(\pmb{\sigma})} .
\end{align}
In the event $ \Omega_N $ of  Proposition~\ref{prop:ASW15}, we thus conclude
$$
\limsup_{N\to \infty} \frac{1}{N t } \ln \langle \fs | e^{t W_\lambda} | \fs \rangle \leq \max\left\{ 0 , t^{-1} p_0(t\lambda) - 1 \right\} + \delta .
$$
By a Borel-Cantelli argument, this establishes this almost-sure bound on the upper limit. Since $ \delta > 0 $ is arbitrary, this concludes the proof.
\end{proof}

\section{Outlook: QREM}
Let us conclude this note with some conjectures, partial results and comparison in case $ W $ is replaced by the spin-flip dynamics generated by $ \widehat W $. 
In that case, the tilted generator $ \widehat H_\lambda :=  \widehat W - \lambda U $ is the QREM. Its low-energy spectrum as well as the phase transitions in the free energy are well understood \cite{Manai:2020ta,MaWa22}. By the Feynman-Kac formula the dynamical phase transition is again described in terms of the asymptotic behavior of $N^{-1}  \ln \langle \fs | e^{t \widehat H_\lambda} | \fs \rangle $.

The phase transition in the largest eigenvalue $ \widehat E_0 := \max \sigma(  \widehat H_\lambda) $ occurs on order $ N $ at the same location $ \lambda = \beta_c^{-1} $ as for $ H_\lambda $. However, the finite-volume corrections are different in the localisation regime, i.e.\ for all realisations of the REM aside from a set of exponentially small probability (see~\cite{MaWa22} for details):
\begin{enumerate}
\item if $ \lambda > \beta_c^{-1} $ we have $ \widehat E_0 =- \lambda \min U + (\lambda \beta_c)^{-1} + \mathcal{O}(N^{-1/4}) $, 
\item if $ \lambda < \beta_c^{-1} $ we have $ \widehat E_0=  \lambda^2 + \mathcal{O}(N^{-1/4}) $.
\end{enumerate}
Following the steps of the lower bound in the proof of Theorem~\ref{thm:mainasym}, it is easy to see that for almost all realizations of the REM one still has:
\begin{equation}\label{eq:lowerQREM}
\liminf_{N\to \infty} \frac{1}{N t } \ln \langle \fs | e^{t \widehat H_\lambda} | \fs \rangle \geq \max\left\{ 0 , t^{-1} p_0(t\lambda) - 1 \right\} .
\end{equation}
We conjecture that this bound is sharp. In fact, using the spectral decomposing as in~\eqref{eq:upperb} and decomposing the sum into positive and negative energies we may again estimate
$$
 \langle \fs | e^{t \widehat H_\lambda} | \fs \rangle \leq \sum_{\substack{ E \sigma(  \widehat H_\lambda)\\ E > 0 } } e^{t E} \left| \langle \fs | \psi_E \rangle \right|^2 + 1 .
$$
The first sum is estimated trivially by $ e^{t \widehat E_0 } $. In case $  \lambda < \beta_c^{-1} $ this yields the upper bound $ \limsup_{N\to \infty} \frac{1}{N t } \ln \langle \fs | e^{t \widehat H_\lambda} | \fs \rangle \leq 0 $, which coincides with the lower bound. In case  $  \lambda > \beta_c^{-1} $, we know from~\cite{MaWa22} that eigenvalues with energies $ E > 0 $ are in one-to-one correspondence with values $ U(\pmb{\sigma}_E) = E + \mathcal{O}(1) $. We conjecture that the local density of states at these energies satisfies $ \lim_{N\to \infty} N^{-1} \ln \langle \fs | 1_{(E-\delta_N,E+\delta N)}(  \widehat H_\lambda ) | \fs \rangle = - \ln 2 $ for all sufficiently small $ \delta > 0 $. This would prove that~\eqref{eq:lowerQREM} is indeed sharp.   
 \section*{Acknowledgements}
 JPG acknowledges financial support from EPSRC Grant no. EP/R04421X/1 and the Leverhulme Trust Grant No. RPG-2018-181. SW thanks the DFG for support under grant EXC-2111 -- 390814868. Numerical simulations were performed using the Sulis Tier 2 HPC platform funded by EPSRC Grant EP/T022108/1 and the HPC Midlands+ consortium. \\
 
 \noindent
{\bf Data:} \; Research data are available from the Nottingham Research Data Management Repository at http://doi.org/10.17639/nott.7196.

\end{document}